\newcommand{\R}{\mathbb R}
\DeclarePairedDelimiter\floor{\lfloor}{\rfloor}
\numberwithin{equation}{section} 
\definecolor{color1}{RGB}{0,139,0} 
\definecolor{color2}{RGB}{154,255,154} 
\newcommand{\N}{\mathbb{N}} 
\newcommand{\ra}{\rightarrow}
\newtheorem{theorem}{Theorem}
\newtheorem{definition}[theorem]{Definition}
\newtheorem{lemma}[theorem]{Lemma}
\newtheorem{remark}[theorem]{Remark}
\newtheorem{fact}[theorem]{Fact}
\newcommand{\PH}{\mathsf{PH}}
\newcommand{\PL}{\mathsf{PL}}
\newcommand{\PP}{\mathsf{PP}}
\newcommand{\X}{\mathsf{X}}
\newcommand{\Y}{\mathsf{Y}}
\newcommand{\E}{\mathsf{E}}
\newcommand{\INF}{\mathsf{INF}}
\newcommand{\UNF}{\mathsf{UNF}}
\title{\textit{Pull} and \textit{Push\&Pull} in Random Evolving Graphs}
\author[1]{Rami Daknama}
\date{\footnotesize{Sunday, 4\textsuperscript{th} February 2018}}
\affil[1]{Ludwig-Maximilians-Universität München, daknama@math.lmu.de}
\begin{document}

\maketitle

\begin{abstract}
The \textit{Push}, the \textit{Pull} and the \textit{Push\&Pull} algorithms are well-known rumor spreading protocols. In all three, in the beginning one node of a graph is informed. In the \textit{Push} setting, every round every informed node chooses a neighbor uniformly at random and, if it is not already informed anyway, informs it. In the \textit{Pull} setting, each round each uninformed node chooses a neighbor uniformly at random and asks it for the rumor; if the asked neighbor is informed, now also the asking node is informed. \textit{Push\&Pull} is a combination of \textit{Push} and \textit{Pull}: In each round, each node picks a neighbor uniformly at random. If at least one of both knows the rumor, after this round, both know the rumor. 
Clementi et al. have considered \textit{Push} in settings where the underlying graph changes each round (\cite{clementi_rumor_2016}). In one setting they investigated, in each round the underlying graph is a newly sampled Erdős-Rényi random graph $G(n,p)$. They show that if $p\geq 1/n$ then with probability $1-o(1)$ (as $n\ra \infty$) the number of rounds needed until all nodes are informed is $\mathcal{O}(\ln(n))$.
Doerr and Kostrygin introduced a general framework to analyze rumor spreading algorithms (\cite{doerr_randomized_2017}); using this framework, for $a>0$ and $p=a/n$ they improved the results from \cite{clementi_rumor_2016} in the described setting. In particular the expected number of rounds needed by \textit{Push} was determined to be $\log_{2-e^{-a}}(n) + 1/(1-e^{-a})\ln(n) + \mathcal{O}(1)$; also large deviation bounds were obtained.  
Using their framework, we investigate \textit{Pull} and \textit{Push\&Pull} in that setting: We prove that the expected number of rounds needed by \textit{Pull} to inform all nodes is $\log_{2-e^{-a}}(n) + 1/a \ln(n) + \mathcal{O}(1)$. Let $\gamma := 2(1-e^{-a})- (1-e^{-a})^2/a$; we prove that the expected number of rounds needed by \textit{Push\&Pull} is $\log_{1+\gamma}(n) + 1/a \ln(n) + \mathcal{O}(1)$; as a byproduct, we obtain large deviation bounds, too.
\end{abstract}

\section{Introduction}
The \textit{Push}, the \textit{Pull} and the \textit{Push\&Pull} algorithms are important and  well-studied rumor spreading protocols \cite{frieze1985,pittel_on_1987,demers_epidemic_1987,feige_randomized_1990, giakkoupis_tight_2011, giakkoupis_tight_2014, panagiotou_randomized_2015, daum_rumor_2016}. In all three, in the beginning one node of a graph is informed. In the \textit{Push} setting, every round every informed node chooses a neighbor uniformly at random and, if it is not already informed anyway, informs it. In the \textit{Pull} setting, each round each uninformed node chooses a neighbor uniformly at random and asks it for the rumor; if the asked neighbor is informed, now also the asking node is informed. \textit{Push\&Pull} is a combination of \textit{Push} and \textit{Pull}: In each round, each node picks a neighbor uniformly at random. If at least one of both knows the rumor, then, after this round, both know the rumor. \\

Recently Clementi et al. have investigated \textit{Push} on random evolving graphs (\cite{clementi_rumor_2016}), i.e.~in a setting where the underlying graph is not fixed but changes over time. One such setting treated in \cite{clementi_rumor_2016} is the following: Each round the underlying graph is a newly (and independently of the previous graphs) sampled Erdős-Rényi random graph $G(n,p)$. We are interested in large values for $n$, thus all asymptotic notation is with respect to $n\ra\infty$ if not explicitly stated differently.  Among other results, in \cite{clementi_rumor_2016} it is shown that if $p\geq 1/n$ then whp (with high probability, i.e.~with probability $1-o(1)$) the number of rounds needed by \textit{Push} is $\mathcal{O}(\ln(n))$.
 Let $a>0$ and let $n>a$ be a natural number. For $p=a/n$ Doerr and Kostrygin have improved this bound (\cite{doerr_randomized_2017}). They have shown that the expected number of rounds needed is $\log_{2-e^{-a}}(n)+ 1/(1-e^{-a})\ln(n) + \mathcal{O}(1)$; moreover, it is shown that constants $\alpha, A >0$ exist such that, if $T_n$ (or short $T$) denotes the needed number of rounds, then for all $r,n \in \N$ we have $P[|T-E[T]|>r]\leq A \exp(-\alpha r)$. This was shown by applying a general framework developed in \cite{doerr_randomized_2017}. This framework exploits that many rumor spreading algorithms are sufficiently characterized by the probability $p_k$ of a node to become informed in a round that starts with $k$ informed nodes and a bound on the covariances between the indicator variables each indicating whether an uninformed node becomes informed in that round.
By bounding $p_k$ and the mentioned covariances, the framework allows to obtain the expected number of rounds needed up to constant additive terms as well as large deviation bounds.\\

We use this framework to investigate \textit{Pull} and \textit{Push\&Pull} in random evolving graphs. 
We show that the expected number of rounds needed by the \textit{Pull} algorithm in the setting described above (i.e.~each round a new $G(n,p)$ is sampled independently of what happened before) is $\log_{2-e^{-a}}(n) + 1/a \ln(n) + \mathcal{O}(1)$. Let $\gamma = 2(1-e^{-a})- (1-e^{-a})^2/a$; then the expected number of rounds needed by \textit{Push\&Pull} is $\log_{1+\gamma}(n) + 1/a \ln(n) + \mathcal{O}(1)$. As a byproduct, we also obtain large deviation bounds. \\

Particularly the results for \textit{Push\&Pull} are interesting. While both, \textit{Push} and \textit{Pull}, need logarithmic time for the last phase of the rumor spreading, when combining them in \textit{Push\&Pull}, \textit{Push} becomes useless in the last phase which might be unexpected. Another interesting aspect is that \textit{Push\&Pull} in the investigated setting is an example where in the first phase when almost no nodes are informed \textit{Push} and \textit{Pull} get in each other's way in the sense that even at the very beginning they inform significantly fewer nodes than the sum of the numbers of nodes they would have informed individually; in other words: even in the beginning many nodes are informed by \textit{Push} as well as by \textit{Pull}.\\

The remainder of this paper is structured as follows: In section \ref{preliminaries} needed preliminaries are considered; in particular this includes the framework introduced in \cite{doerr_randomized_2017}.
In section \ref{pullsection} the result for \textit{Pull} is proven and in section \ref{pushpullsection} the result for \textit{Push\&Pull} is proven.

\section{Preliminaries}  
\label{preliminaries}
We start with stating the framework from \cite{doerr_randomized_2017}.
Therefore we consider only homogeneous rumor spreading processes characterized as follows: We consider graphs with $n$ nodes, in the beginning one node is informed, the other nodes are uninformed. Once a node is informed it remains informed. The process is partitioned into rounds, in each round each uninformed node can become informed. Whenever a round starts with $k$ nodes, we assume that there is a $p_k$ (only depending on $k$) such that each uninformed node becomes informed in that round with probability $p_k$; hence $p_k$ is called the success probability. A rumor spreading process as described is called homogeneous (\cite{doerr_randomized_2017}).
By suitably bounding the success probability and the covariance numbers defined as follows, bounds on the rumor spreading time (see Definition \ref{rumorspreadingtime}) can be obtained.   
 
\begin{definition}[Covariance numbers, \cite{doerr_randomized_2017}]
For a given homogeneous process and $k\in \{1,\dots,n-1\}$ let $c_k$ be the smallest number such that whenever a round starts with $k$ informed nodes for any two uninformed nodes $x_1,x_2$, the indicator random variables $X_1,X_2$ for the events that these nodes become informed in this round satisfy $Cov[X_1,X_2]\leq c_k$.
\end{definition} 

\begin{definition}[Rumor spreading times, \cite{doerr_randomized_2017}]  \label{rumorspreadingtime} Consider a homogeneous rumor spreading process. For all $t=0,1,\dots$ denote $I_t$ the number of informed nodes at the end of the $t$-th round $(I_0:=1)$. Let $k\leq m \leq n$. Let $T_n(k,m)$ (or short $T(k,m)$) denote the time it takes to increase the number of informed nodes from $k$ to $m$ or more, that is, $T(k,m)=\min\{t-s\mid I_s=k \text{ and } I_t \geq m\}$. We call $T(1,n)$ the rumor spreading time of the process.
\end{definition}

If the following exponential growth condition is fulfilled, then Theorem \ref{expgrowth} states that there is an exponential growing phase, i.e.~if few enough nodes are informed, then the number of informed nodes essentially increases by a constant factor each round and the rumor spreading time can be bounded respectively. 

\begin{definition}[Exponential growth conditions, \cite{doerr_randomized_2017}]
Let $\gamma_n$ be bounded between two positive constants. Let $a,b,c\geq 0$ and $0<f<1$. We say that a homogeneous rumor spreading process satisfies the upper (respectively lower) exponential growth conditions in $[1,fn[$ if for any $n\in \N$ big enough the following properties are satisfied for any $k<fn$.
\begin{itemize}
\item $p_k \geq \gamma_n \frac{k}{n}(1-a \frac{k}{n}-\frac{b}{\ln(n)})$ (respectively $p_k \leq \gamma_n \frac{k}{n}(1+a\frac{k}{n}+\frac{b}{\ln(n)})$).
\item $c_k \leq c \frac{k}{n^2}$.
\end{itemize}
In the case of the upper exponential growth condition, we also require $af<1$.
\end{definition}

\begin{theorem}[\cite{doerr_randomized_2017}]
\label{expgrowth}
If a homogeneous rumor spreading process satisfies the upper (lower) exponential growth conditions in $[1,fn[$, then there are constants $A,\alpha>0$ such that 
\begin{align*}
&E[T(1,fn)]\underset{(\geq)}{\leq} \log_{1+\gamma_n}(n) + \mathcal{O}(1),\\
&P[T(1,fn)\underset{(\leq)}{\geq} \log_{1+\gamma_n}(n) \underset{(-)}{+} r] \leq A \exp(-\alpha r) \text{ for all } r, n \in \N.
\end{align*}

When the lower exponential growth conditions are satisfied, then also there is an $f' \in ]f,1[$ such that with probability $1-\mathcal{O}(1/n)$ at most $f'n$ nodes are informed at the end of round $T(1,f n)$.
\end{theorem}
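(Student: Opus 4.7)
The plan is to compare the random sequence $(I_t)$ to a deterministic recursion, and then upgrade a per-round variance bound into an exponential tail via a doubling-phase decomposition. Given $I_t=k$ with $k<fn$, the increment $Z:=I_{t+1}-k$ is a sum of $n-k$ Bernoulli variables of mean $p_k$ with pairwise covariances at most $c_k$, so $E[Z\mid I_t=k]=(n-k)p_k$ and $\operatorname{Var}[Z\mid I_t=k]\leq (n-k)p_k+(n-k)^2 c_k\leq E[Z]+ck$. Combined with the lower bound on $p_k$ from the upper growth condition (and the mirror image for the lower condition), a one-sided Chebyshev inequality yields that, for any fixed $\varepsilon>0$,
\[
P\bigl[Z\leq(1-\varepsilon)\,E[Z\mid I_t=k]\bigm|I_t=k\bigr]\leq C(\varepsilon)/k.
\]

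Next comes the deterministic bookkeeping. Define $\phi(k):=k+\gamma_n k(1-k/n)(1-ak/n-b/\ln(n))$ and set $u_0:=1$, $u_{t+1}:=\phi(u_t)$. While $u_t\leq n/\ln(n)$ the multiplicative factor is $(1+\gamma_n)(1-O(1/\ln(n)))$, so iterating $O(\log n)$ times leaves a cumulative slack of only $(1-O(1/\ln(n)))^{O(\log n)}=\Theta(1)$; once $u_t$ enters the terminal range $[n/\ln(n),fn[$, the assumption $af<1$ keeps the factor bounded above $1$ by a positive constant and only $O(1)$ more rounds are needed to exceed $fn$. Coupling $(I_t)$ to $(u_t)$ via the per-round Chebyshev bound — whose failure probabilities $O(1/u_t)$ form a geometric series summing to $O(1)$ — delivers $E[T(1,fn)]\leq \log_{1+\gamma_n}(n)+O(1)$; the matching lower bound is obtained symmetrically under the lower growth condition.

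For the exponential tail I would partition $[1,fn[$ into doubling phases $[(1+\gamma_n/2)^i,(1+\gamma_n/2)^{i+1}[$ and let $\tau_i$ denote the number of rounds spent in phase $i$. The one-round Chebyshev estimate with fixed $\varepsilon$ shows that, conditional on entering phase $i$, the probability of failing to leave in a single round is bounded by some constant $q<1$, so by the Markov property $\tau_i$ is stochastically dominated by a geometric variable with success probability $1-q$, conditionally independently of the past. Summing the $O(\log n)$ dominated geometrics and subtracting the baseline $\log_{1+\gamma_n}(n)+O(1)$, a standard Chernoff bound for sums of conditionally independent geometric random variables produces $P[T(1,fn)\geq \log_{1+\gamma_n}(n)+r]\leq A\exp(-\alpha r)$, and symmetrically for the lower deviation. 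For the overshoot claim under the lower growth condition, one more Chebyshev applied to the last round suffices: since $E[Z]$ is bounded by a constant multiple of $n$ and $\operatorname{Var}[Z]=O(n)$, the increment is at most $E[Z]+O(\sqrt{n\log n})\leq (f'-f)n$ for a suitable constant $f'\in\,]f,1[$ with probability $1-O(1/n)$.

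The main obstacle will be the exponential tail bound. A naive union bound across the $O(\log n)$ rounds only produces polynomial decay; the argument must treat the number of rounds spent in each doubling phase as a genuine geometric waiting time and carefully combine the (conditionally independent) geometric tails, for example via a moment-generating-function or Chernoff-type estimate for sums of geometrics, to yield a single clean exponential bound.
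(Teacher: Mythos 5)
The paper does not actually prove this statement: Theorem \ref{expgrowth} is quoted from \cite{doerr_randomized_2017} and used as a black box, so there is no in-paper proof to compare against. Judged on its own merits, your outline follows the natural route (per-round mean/variance control from $p_k$ and $c_k$, a deterministic comparison recursion, doubling phases with geometrically dominated waiting times for the tail), but it contains one step that fails as written. Your bookkeeping for the main term splits $[1,fn[$ at $n/\ln(n)$ and asserts that the terminal range $[n/\ln(n),fn[$ costs only $O(1)$ extra rounds. It does not: there the guaranteed per-round growth factor is only some constant $1+\gamma''$ with $\gamma''$ possibly much smaller than $\gamma_n$ (it degrades to roughly $\gamma_n(1-f)(1-af)$ near $k=fn$), so traversing a multiplicative gap of $f\ln(n)$ takes $\Theta(\ln\ln n)$ rounds, and these are not cancelled by the $\log_{1+\gamma_n}(\ln n)$ rounds saved in the first range. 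As stated, your argument only delivers $\log_{1+\gamma_n}(n)+O(\ln\ln n)$. The correct accounting avoids any fixed splitting point: write $\ln I_T=\sum_{t<T}\ln(I_{t+1}/I_t)$ and note that the per-round shortfall relative to $\ln(1+\gamma_n)$ is $O(I_t/n)+O(1/\ln n)$; since the $I_t$ grow at least geometrically, $\sum_{t<T} I_t=O(I_T)=O(n)$, so the cumulative shortfall over all $O(\ln n)$ rounds is $O(1)$.

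Two smaller points. First, your claim that the one-round Chebyshev bound yields a uniform failure probability $q<1$ in every doubling phase breaks down in the first $O(1)$ phases, where $k=O(1)$ and the bound $\operatorname{Var}[Z]/(\varepsilon E[Z\mid I_t=k])^2=O(1/k)$ need not be below $1$; there you need a second-moment (Paley--Zygmund) argument giving $P[Z\geq 1\mid I_t=k]\geq \text{const}>0$, combined with the monotonicity of $I_t$, to dominate the phase length by a geometric variable. Second, in the overshoot claim a deviation of $O(\sqrt{n\log n})$ only gives failure probability $O(1/\log n)$ via Chebyshev; take the deviation to be $\delta n$ for a small constant $\delta$ to reach the stated $1-O(1/n)$ (and note that one must also check that $f+E[Z]+\delta n$ can be kept below $f'n$ for some $f'<1$, which constrains the choice of $f$ relative to $\gamma_n$ and $a$).
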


If the following exponential shrinking condition is fulfilled, then Theorem \ref{expshrink} states that there is an exponential shrinking phase, i.e.~if enough nodes are informed, then the number of uninformed nodes essentially decreases by a constant factor each round and the rumor spreading time can be bounded respectively.

\begin{definition}[Exponential shrinking conditions, \cite{doerr_randomized_2017}]
Let $\rho_n$ be bounded between two positive constants. Let $0<g<1$, and $a,c \in \R_{\geq 0}$. We say that a homogeneous rumor spreading process satisfies the upper (respectively lower) exponential shrinking conditions if for any $n \in \N$ big enough, the following properties are satisfied for all $u=n-k\leq g n$.
\begin{itemize}
\item $1-p_k=1-p_{n-u}\leq e^{-\rho_n} + a \frac{u}{n}$ (respectively $1-p_k=1-p_{n-u}\geq e^{-\rho_n}-a\frac{u}{n}$)
\item $c_k=c_{n-u}\leq \frac{c}{u}$
\end{itemize}
For the upper exponential shrinking conditions, we also assume that $e^{-\rho_n}+a g <1$.
\end{definition}

\begin{theorem}[\cite{doerr_randomized_2017}]
\label{expshrink}
If a homogeneous rumor spreading process satisfies the upper (lower) exponential shrinking conditions, then there are $A',\alpha'>0$ such that 
\begin{align*}
&E[T(n-\floor{gn},n)] \underset{(\geq)}{\leq} \frac{1}{\rho_n}\ln(n) + \mathcal{O}(1), \\
&P[T(n-\floor{gn},n)  \underset{(\leq)}{\geq} \frac{1}{\rho_n}\ln(n) \underset{(-)}{+} r] \leq A' \exp(-\alpha' r) \text{ for all } r, n \in \N.
\end{align*}
\end{theorem}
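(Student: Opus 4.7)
The plan is to track the number $U_t := n - I_t$ of uninformed nodes and show that, under the upper exponential shrinking condition, it contracts by a factor close to $e^{-\rho_n}$ per round, with fluctuations small enough (via the covariance bound) to yield both the $\frac{1}{\rho_n}\ln(n)+\mathcal{O}(1)$ expectation and the exponential tail on $T(n-\floor{gn},n)$. The lower-bound case is symmetric, obtained by reversing the inequalities throughout.

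For the drift, condition on a round starting with $u=n-k\leq gn$ uninformed nodes, and let $X_1,\dots,X_u$ be the indicators that these nodes remain uninformed. The upper shrinking condition gives $E[X_i]=1-p_k\leq e^{-\rho_n}+a u/n$, so $E[U_{t+1}\mid U_t=u]\leq u\bigl(e^{-\rho_n}+a u/n\bigr)$. Since $e^{-\rho_n}+ag<1$, this factor is bounded away from $1$ uniformly for $u\leq gn$. The covariance bound $c_{n-u}\leq c/u$ controls the second moment:
\[
\mathrm{Var}[U_{t+1}\mid U_t=u]\leq \sum_i \mathrm{Var}[X_i] + \sum_{i\neq j} c_{n-u} \leq u + u(u-1)\cdot c/u \leq (c+1)u,
\]
which is $o(u^2)$ whenever $u\to\infty$, so Chebyshev gives $U_{t+1}=E[U_{t+1}\mid U_t]+\mathcal{O}(\sqrt{u})$ with high probability.

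Iterating the drift yields $E[U_t]\leq (e^{-\rho_n}+ag)^t\floor{gn}$, hence after $t_0=\frac{1}{\rho_n}\ln(n)+\mathcal{O}(1)$ rounds we have $E[U_{t_0}]=\mathcal{O}(1)$. The endgame where $U_t$ is of constant order is handled separately: there the per-node success probability is $\Omega(1)$, and an explicit calculation shows that $\mathcal{O}(1)$ further rounds suffice in expectation. Chaining the two phases yields $E[T(n-\floor{gn},n)]\leq \frac{1}{\rho_n}\ln(n)+\mathcal{O}(1)$. For the tail bound, I would prove inductively that each block of $\mathcal{O}(1)$ additional rounds shrinks $U$ by a fixed factor with probability at least some $1-\delta>0$; this follows from the drift-plus-variance estimate when $U$ is large, and from the constant per-round success probability when $U$ is small. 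Composing these block estimates produces the geometric tail $P[T>\frac{1}{\rho_n}\ln(n)+r]\leq A'\exp(-\alpha' r)$.

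The main obstacle I anticipate is exactly this endgame, where Chebyshev is useless because the conditional variance is comparable to the conditional mean. I would replace it by a direct comparison of $U_{t+1}$ with a $\mathrm{Bin}(u,1-p_k)$ random variable, using the $c/u$ covariance bound to absorb the deviation from independence, and then bound binomial tails by hand. A secondary subtlety is that the upper shrinking condition only bounds the contraction factor by $e^{-\rho_n}+ag$ rather than $e^{-\rho_n}$ itself; one recovers the clean constant $1/\rho_n$ in the final estimate by splitting off an initial phase during which $u/n$ itself decays geometrically, so that the additive correction $au/n$ sums to an $\mathcal{O}(1)$ overhead in total.
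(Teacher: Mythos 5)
First, a point of order: the paper contains no proof of this theorem. It is imported verbatim from \cite{doerr_randomized_2017} and used as a black box, so there is no in-paper argument to compare yours against; I can only assess your sketch on its own merits. On the upper-bound side your route is the natural one (track $U_t=n-I_t$, iterate the one-step drift, treat the constant-size endgame separately), and you correctly isolate the real subtlety: naive iteration of the factor $e^{-\rho_n}+au/n$ only gives the rate $\ln\bigl(1/(e^{-\rho_n}+ag)\bigr)$ rather than $\rho_n$, and this is repaired because $u_t/n$ decays geometrically so the corrections multiply out to $\mathcal{O}(1)$. In fact your machinery is heavier than needed there: since $U_t\leq gn$ deterministically, the conditional drift iterates at the level of first moments alone, and Markov's inequality $P[T>t]=P[U_t\geq 1]\leq E[U_t]$ already yields both the expectation bound and the exponential tail, with no variance control, no Chebyshev, and no block decomposition.

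The genuine gap is the sentence ``the lower-bound case is symmetric, obtained by reversing the inequalities throughout.'' It is not. A lower bound on $E[U_t]$ says nothing about $P[U_t\geq 1]$, so the first-moment/Markov scheme has no reversed counterpart; the lower bound is exactly where the covariance hypothesis $c_{n-u}\leq c/u$ must be used, via a second-moment argument (e.g.\ Paley--Zygmund applied to $U_t$ at time $\frac{1}{\rho_n}\ln(n)-r$, after propagating the bound $E[U_t^2]-E[U_t]^2=\mathcal{O}(E[U_t])$ through the recursion using the conditional variance estimate $(c+1)u$ that you computed). Your proposed substitute --- per-round Chebyshev chained over all rounds --- does not close this: the per-round failure probability is of order $1/u_t$, and since $u_t$ decays geometrically the sum $\sum_t 1/u_t$ is dominated by the late rounds and is $\Theta(1)$ once $u_t$ reaches constant size, so a union bound leaves you with constant (not vanishing) failure probability precisely in the regime where the lower bound has to bite. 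You have the right ingredient (the variance bound) but the wrong assembly; the two halves of the theorem require structurally different arguments, not mirror images of one another.
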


\begin{remark}
\label{additiveChanges}
It suffices to compute $\gamma_n$ and $\rho_n$ from Theorems \ref{expgrowth} and \ref{expshrink} respectively up to additive $\mathcal{O}(1/\ln(n))$ terms. 
\end{remark}

We will use the following two well-known facts in our proofs; Fact \ref{fact_1} is a simple consequence of Fact \ref{fact_2}.

\begin{fact}
\label{fact_2}
Let $(x_n)_{n \in \N}$ be a sequence of real numbers such that for each $n \in \N$ we have $0<x_n<1$. Then $(1+x_n/n)^n=e^{x_n} + \mathcal{O}(x_n^2/n).$
\end{fact}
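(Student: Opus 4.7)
The plan is to reduce the identity to a Taylor expansion of $\ln(1+t)$ and then exponentiate, using the uniform bound $0<x_n<1$ to absorb higher-order terms.

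First I would take logarithms and write, for $t=x_n/n\in (0,1/n)$,
\[
\ln(1+t)=t-\tfrac{t^2}{2}+\mathcal{O}(t^3).
\]
Multiplying by $n$ and substituting $t=x_n/n$ gives
\[
n\ln\!\left(1+\tfrac{x_n}{n}\right)=x_n-\tfrac{x_n^2}{2n}+\mathcal{O}\!\left(\tfrac{x_n^3}{n^2}\right).
\]
Because $0<x_n<1$, both $x_n^3/n^2$ and $x_n^2/(2n)$ are $\mathcal{O}(x_n^2/n)$, so this collapses to
\[
n\ln\!\left(1+\tfrac{x_n}{n}\right)=x_n+\mathcal{O}\!\left(\tfrac{x_n^2}{n}\right).
\]

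Next I would exponentiate. Since the error term $\mathcal{O}(x_n^2/n)$ tends to $0$ (again using $x_n<1$), the elementary estimate $e^y=1+\mathcal{O}(y)$ for $y\to 0$ yields
\[
\exp\!\left(\mathcal{O}\!\left(\tfrac{x_n^2}{n}\right)\right)=1+\mathcal{O}\!\left(\tfrac{x_n^2}{n}\right).
\]
Therefore
\[
\left(1+\tfrac{x_n}{n}\right)^{n}=e^{x_n}\cdot\left(1+\mathcal{O}\!\left(\tfrac{x_n^2}{n}\right)\right)=e^{x_n}+e^{x_n}\cdot\mathcal{O}\!\left(\tfrac{x_n^2}{n}\right).
\]
Finally, since $x_n<1$ implies $e^{x_n}<e$, the factor $e^{x_n}$ in the error term is a bounded constant and can be absorbed, giving $(1+x_n/n)^n=e^{x_n}+\mathcal{O}(x_n^2/n)$ as claimed.

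There is no real obstacle; the only care needed is to make sure the implicit constants in the big-$\mathcal{O}$ are uniform in $n$, which is ensured by the hypothesis $0<x_n<1$ (so $x_n/n$ lies in a fixed compact subinterval of $(-1,\infty)$ on which the Taylor remainder for $\ln(1+t)$ is controlled by a universal constant, and similarly for $e^y$ near $0$).
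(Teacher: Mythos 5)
Your proof is correct. The paper itself states Fact \ref{fact_2} without proof (it is introduced only as a ``well-known fact''), so there is no argument in the paper to compare against; your route via $n\ln(1+x_n/n)=x_n+\mathcal{O}(x_n^2/n)$ followed by exponentiation and absorbing the bounded factor $e^{x_n}<e$ is the standard one, and the hypothesis $0<x_n<1$ is used exactly where it is needed to make the implicit constants uniform.
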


\begin{fact}
\label{fact_1}
Let $a>0$; consider an Erdős-Rényi random graph $G=G(n,a/n)$. Let $x$ be a node. The probability that $x$ is isolated is $e^{-a} + \mathcal{O}(1/n)$.
\end{fact}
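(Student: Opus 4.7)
The plan is to compute the isolation probability directly from the independence of edges in $G(n,a/n)$ and then match the asymptotics to $e^{-a}$ using a Taylor expansion.

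First I would observe that whether node $x$ is isolated depends only on the $n-1$ potential edges from $x$ to the remaining vertices, and that each of these edges is present independently with probability $a/n$. Hence
\begin{equation*}
P[x \text{ is isolated}] = \left(1-\frac{a}{n}\right)^{n-1}.
\end{equation*}
For $n$ large enough (so $a/n<1$ and the logarithm is defined), I would take logarithms and expand $\ln(1-a/n) = -a/n - a^2/(2n^2) + \mathcal{O}(1/n^3)$, which gives
\begin{equation*}
(n-1)\ln\!\left(1-\frac{a}{n}\right) = -a + \frac{a}{n} - \frac{a^2}{2n} + \mathcal{O}(1/n^2) = -a + \mathcal{O}(1/n).
\end{equation*}
Exponentiating and using $e^{\mathcal{O}(1/n)} = 1 + \mathcal{O}(1/n)$ yields $(1-a/n)^{n-1} = e^{-a}(1+\mathcal{O}(1/n)) = e^{-a} + \mathcal{O}(1/n)$.

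Alternatively, to stay closer to Fact~\ref{fact_2}, I would write $(1-a/n)^{n-1} = (1-a/n)^n \cdot (1-a/n)^{-1}$ and treat the two factors separately: the second factor equals $1+\mathcal{O}(1/n)$ by a geometric series, and for the first factor I would use a one-line variant of Fact~\ref{fact_2} (applied with the negative sign, which is equally valid provided $a/n<1$) to conclude $(1-a/n)^n = e^{-a} + \mathcal{O}(1/n)$. Multiplying the two expansions and absorbing the cross-terms into $\mathcal{O}(1/n)$ again gives the claim. There is no real obstacle here; the only thing to be slightly careful about is that $a$ is a fixed positive constant (independent of $n$), so that the $\mathcal{O}$-terms produced by truncating the Taylor series of $\ln(1-a/n)$ genuinely scale like $1/n$ with an $a$-dependent implied constant, as required.
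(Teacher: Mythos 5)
Your proof is correct and matches the paper's intended argument: the paper states Fact \ref{fact_1} without proof, remarking only that it is a simple consequence of Fact \ref{fact_2}, which is precisely your second route. Your observation that Fact \ref{fact_2} as literally stated requires $0<x_n<1$ while the relevant argument here is $-a$ is a fair point of care, and either of your two derivations of $(1-a/n)^{n-1}=e^{-a}+\mathcal{O}(1/n)$ closes that small gap correctly.
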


Theorem \ref{push} considers the number of rounds \textit{Push} needs in the described setting. While we do not need the Theorem for the proof of our results, we state it for completeness. 

\begin{theorem}[\cite{doerr_randomized_2017}]
\label{push}
Let $a>0$ and let $T_n$ be the time the push protocol needs to inform all $n$ nodes when in each round a newly sampled Erdős-Rényi random graph $G=G(n,a/n)$ is the underlying graph. Then $$E[T_n]=\log_{2-e^{-a}}(n) + \frac{1}{1-e^{-a}} \ln(n) + \mathcal{O}(1)$$
and there are constants $A,\alpha>0$ such that for all $r,n\in\N$ $$P[|T-E[T]| \geq r]\leq A \exp(-\alpha r).$$
\end{theorem}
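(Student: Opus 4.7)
The plan is to apply the Doerr–Kostrygin framework (Theorems \ref{expgrowth} and \ref{expshrink}) to the \textit{Push} protocol on the evolving $G(n,a/n)$ model, with target parameters (up to the $\mathcal{O}(1/\ln n)$ additive tolerance allowed by Remark \ref{additiveChanges}) $\gamma_n = 1-e^{-a}$ and $\rho_n = 1-e^{-a}$. With these values the two theorems yield $\log_{2-e^{-a}}(n) + \mathcal{O}(1)$ for the growth phase and $\tfrac{1}{1-e^{-a}}\ln(n) + \mathcal{O}(1)$ for the shrinking phase, whose sum is exactly the announced expectation; the combined tail bounds from the two theorems deliver the large deviation statement. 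The structural work is to verify the four conditions on $p_k$ and $c_k$.

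To compute $p_k$, fix an uninformed $x$ and an informed $v$. Let $q$ denote the probability that $v$ informs $x$ in a single round. Writing
\begin{equation*}
q = P[v \sim x]\cdot E[1/\deg(v) \mid v\sim x],
\end{equation*}
the conditional distribution of $\deg(v)-1$ given $v\sim x$ is $\mathrm{Bin}(n-2,a/n)$, so using the standard identity $E[1/(Y+1)] = (1-e^{-\lambda})/\lambda$ for $Y\sim \mathrm{Poisson}(\lambda)$ together with Fact \ref{fact_2} yields $q = (1-e^{-a})/n + \mathcal{O}(1/n^2)$. Since the neighborhoods of different informed nodes in the freshly sampled $G(n,a/n)$ are independent (modulo the single edge between any two of them, whose influence is absorbed into the error term), one has $1-p_k = (1-q)^k$ up to a multiplicative factor $1 + \mathcal{O}(k/n^2)$. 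Expanding this with Fact \ref{fact_2} delivers $p_k = (1-e^{-a})\tfrac{k}{n}(1 + \mathcal{O}(k/n))$ for the growth regime and $1-p_k = e^{-(1-e^{-a})} + \mathcal{O}(u/n)$ with $u = n-k$ for the shrinking regime, giving both upper and lower versions of the required inequalities.

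For the covariance, let $X_1,X_2$ be the indicators that $x_1,x_2$ become informed, and let $Y_v^i$ indicate that $v$ informs $x_i$. Since $v$ pushes to exactly one neighbor, $Y_v^1 Y_v^2 \equiv 0$, hence $E[(1-Y_v^1)(1-Y_v^2)] = 1 - 2q$. By (near-)independence across informed nodes,
\begin{equation*}
P[X_1 = 0,\, X_2 = 0] = (1-2q)^k \quad\text{and}\quad P[X_1 = 0]\,P[X_2 = 0] = (1-q)^{2k},
\end{equation*}
and a Taylor expansion shows $\mathrm{Cov}(X_1,X_2) = (1-2q)^k - (1-q)^{2k}$ is non-positive up to lower order terms; in particular the bounds $c_k = \mathcal{O}(k/n^2)$ and $c_k = \mathcal{O}(1/u)$ follow trivially (indeed $c_k\le 0$ suffices). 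Finally, fixing $f,g \in (0,1)$ with $f+g > 1$ (and, as the upper growth condition demands, $af<1$ after rescaling) lets us concatenate the two phases and combine the tail estimates.

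The main obstacle, I anticipate, is bookkeeping rather than conceptual: the Poisson limit for $\deg(v)-1$ must be made quantitative with an explicit $\mathcal{O}(1/n)$ error, and in the shrinking regime the expansion of $(1-q)^k$ with $k$ close to $n$ leaves second-order residuals of order $q\cdot e^{-(1-e^{-a})}$ that must be absorbed cleanly into the $\mathcal{O}(u/n)$ term demanded by the shrinking condition. This is where Fact \ref{fact_2} needs to be applied with the $x_n^2/n$ error term tracked explicitly rather than symbolically, so that the same $\rho_n$ works for both the upper and the lower version of the condition.
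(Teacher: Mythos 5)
The paper does not prove this statement at all: Theorem \ref{push} is imported verbatim from \cite{doerr_randomized_2017} and the author explicitly remarks that it is stated only for completeness and is not needed for the paper's results. So there is no in-paper proof to compare against; the closest analogue is the proof of Theorem \ref{pull}, which applies the same framework with the same two-phase structure you use. Your reconstruction is essentially the standard argument and the parameter values $\gamma_n=1-e^{-a}$, $\rho_n=1-e^{-a}$ are the right ones; the computation of $q$ via $E[1/\deg(v)\mid v\sim x]$ is correct, and in fact the exact binomial identity you need is already in the paper as Lemma \ref{explicit_form_for_sum_intersection_venn_diagram} (with $\mu=0$ and $n$ shifted), which is cleaner than invoking the Poisson identity and then arguing the limit is quantitative.

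The one place where you are too quick is the covariance step. The identities $P[X_1=0,X_2=0]=(1-2q)^k$ and $P[X_1=0]P[X_2=0]=(1-q)^{2k}$ hold only under exact independence of the events indexed by different informed nodes, which fails because $\deg(v)$ and $\deg(w)$ share the edge $\{v,w\}$ (and each of $Y_v^1,Y_v^2$ shares the randomness of $v$'s degree and choice). Consequently the conclusion ``$c_k\le 0$ suffices'' is not available: the covariance can be positive, of the order of the accumulated correction terms, and what you must actually verify is that these corrections are $\mathcal{O}(k/n^2)$ (growth) and $\mathcal{O}(1/u)$ (shrinking). This does work out --- each of the $\mathcal{O}(k^2)$ pairs of informed nodes contributes $\mathcal{O}(1/n^3)$, giving $\mathcal{O}(k^2/n^3)$, which is dominated by both required bounds --- but it is precisely the step you label ``trivial'' that carries the content. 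A cleaner route, and the one the paper uses for \textit{Pull} and \textit{Push\&Pull}, is to bound $P[\Y\mid\X]\le P[\Y\mid\neg\E]\le P[\Y]/(1-a/n)$ by conditioning on the absence of a single edge, which packages all of these dependencies into one $\mathcal{O}(1/n)$ factor; adapting that trick to \textit{Push} (conditioning on the relevant shared edges being absent) would make your covariance paragraph rigorous with less bookkeeping. The remaining items --- the near-independence error in $1-p_k=(1-q)^k$, the absorption of the $\mathcal{O}(1/n)$ residuals into $\mathcal{O}(u/n)$ using $u\ge 1$, and the gluing of the two phases via $f+g>1$ --- are handled correctly.
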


It is observed that the obtained rumor spreading time is the same (up to constant terms) as if the underlying graph is a complete graph but message transmissions fail independently with probability $e^{-a}$ which, up to additive $\mathcal{O}(1/n)$ terms is the probability that a vertex is isolated. We will see that this also holds for \textit{Pull}. Interestingly it does not hold for \textit{Push\&Pull}; we provide an explanation in Remark \ref{explanation}.

\section{\textit{Pull} in Random Evolving Graphs}
\label{pullsection}
\begin{theorem}
\label{pull}
Let $a>0$ and assume that each round a newly sampled Erdős-Rényi random graph $G(n,a/n)$ is the underlying graph. Then for the rumor spreading time of \textit{Pull}, $T_n$, we have
 $$E[T_n]=\log_{2-e^{-a}}(n) + \frac{1}{a}\ln(n) + \mathcal{O}(1)$$ 
and there are constants $A,\alpha>0$ such that for all $r,n \in \N$ $$P[|T_n-E[T_n]| \geq r ] \leq A \exp(- \alpha r).$$ 
\end{theorem}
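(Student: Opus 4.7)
The plan is to apply the framework of Theorems~\ref{expgrowth} and~\ref{expshrink}, with $\gamma_n=1-e^{-a}$ (so that $1+\gamma_n=2-e^{-a}$) for the growth phase and $\rho_n=a$ for the shrinking phase. First, I would compute the success probability $p_k$. Fix an uninformed node $x$ in a round starting with $k$ informed nodes. The edges from $x$ to the other $n-1$ nodes are independent $\mathrm{Ber}(a/n)$ variables, and, by symmetry over these $n-1$ candidates, every other node has the same probability of being the one $x$ picks; since exactly one candidate is picked whenever $\deg(x)>0$, that common probability equals $P[\deg(x)>0]/(n-1)$. Summing over the $k$ informed candidates gives
\[
p_k \;=\; \frac{k\bigl(1-(1-a/n)^{n-1}\bigr)}{n-1} \;=\; \frac{k(1-e^{-a})}{n-1} + \mathcal{O}(k/n^{2}),
\]
using Fact~\ref{fact_2}. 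This immediately yields the upper and lower exponential growth conditions with $\gamma_n=1-e^{-a}$. For the shrinking phase, writing $u=n-k$ and expanding,
\[
1-p_k \;=\; e^{-a} + (1-e^{-a})\frac{u-1}{n-1} + \mathcal{O}(1/n),
\]
which fits the required form $e^{-\rho_n}\pm a'\,u/n$ (up to the $\mathcal{O}(1/\ln n)$ tolerance of Remark~\ref{additiveChanges}) with $\rho_n=a$.

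Next I would bound the covariances. Let $x_1,x_2$ be two uninformed nodes and $X_1,X_2$ the indicators of their being informed. The only random object linking $x_1$ and $x_2$ is the potential edge $\{x_1,x_2\}$; conditioning on its status, the remaining edges incident to $x_1$ involve vertex-pairs disjoint from those of $x_2$, so they are independent, and the uniform neighbor choices of $x_1$ and $x_2$ are independent as well. Hence $X_1$ and $X_2$ are conditionally independent in both cases. Writing $\alpha:=E[X_1\mid x_1\not\sim x_2]$ and $\beta:=E[X_1\mid x_1\sim x_2]$ (the same for $X_2$ by symmetry), a short computation with $q=a/n$ gives
\[
\operatorname{Cov}(X_1,X_2) \;=\; q(1-q)(\alpha-\beta)^{2}.
\]
Since both $\alpha$ and $\beta$ are $\mathcal{O}(k/n)$, we obtain $\operatorname{Cov}(X_1,X_2)=\mathcal{O}(k^{2}/n^{3})$. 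This is comfortably below $c\cdot k/n^{2}$ required for the growth phase and, when $u\le g n$, also below $c/u$ (because then $k^{2}/n^{3}=\mathcal{O}(1/n)\le\mathcal{O}(1/u)$), so the covariance hypotheses of both theorems are satisfied.

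Finally, I would invoke Theorems~\ref{expgrowth} and~\ref{expshrink} in both their upper and lower versions: the growth phase contributes $\log_{2-e^{-a}}(n)+\mathcal{O}(1)$ to $E[T_n]$ and the shrinking phase contributes $\tfrac{1}{a}\ln(n)+\mathcal{O}(1)$. Combining the two exponential-tail estimates by a standard union bound yields the desired concentration $P[|T_n-E[T_n]|\ge r]\le A\exp(-\alpha r)$, completing the proof. The main obstacle I anticipate is the covariance step: the computation of $p_k$ via symmetry is clean, but one must find the right conditioning (on the edge $\{x_1,x_2\}$) to isolate the single source of dependence and make the $\operatorname{Cov}=q(1-q)(\alpha-\beta)^{2}$ identity visible.
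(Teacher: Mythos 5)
Your proof is correct and follows the same overall route as the paper: verify homogeneity, compute $p_k=\frac{k}{n-1}\bigl(1-(1-a/n)^{n-1}\bigr)=(1-e^{-a}+\mathcal{O}(1/n))\frac{k}{n}$, read off $\gamma_n=1-e^{-a}$ for the growth phase and $\rho_n=a$ for the shrinking phase up to the tolerance of Remark~\ref{additiveChanges}, check the covariance hypotheses, and invoke Theorems~\ref{expgrowth} and~\ref{expshrink}. The one place you genuinely diverge is the covariance step. The paper bounds $P[\Y\mid\X]-P[\Y]$ by comparing with $P[\Y\mid\neg\E]\le P[\Y]/(1-a/n)$, which yields $c_k=\mathcal{O}(k/n^2)$; you instead observe that $X_1$ and $X_2$ are conditionally independent given the status of the edge $\{x_1,x_2\}$ and obtain the exact identity $\operatorname{Cov}(X_1,X_2)=q(1-q)(\alpha-\beta)^2$ with $q=a/n$, hence the sharper bound $\mathcal{O}(k^2/n^3)$. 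Your identity is valid: once the edge $\{x_1,x_2\}$ is fixed, $X_i$ depends only on the edges from $x_i$ to the remaining vertices and on $x_i$'s own uniform choice, and these collections of random objects are disjoint, so conditional independence holds; together with $\alpha,\beta=\mathcal{O}(k/n)$ this gives a bound that comfortably satisfies the covariance hypotheses of both phases (and is in fact stronger than what the paper establishes). Your version is arguably cleaner, since the paper's inequality $P[\Y\mid\X]\le P[\Y\mid\neg\E]$ itself implicitly rests on the same conditional-independence observation plus a monotonicity claim that the paper only motivates informally. Everything else --- the shrinking computation $1-p_{n-u}=e^{-a}+(1-e^{-a})\frac{u-1}{n-1}+\mathcal{O}(1/n)$ and the combination of the two phases at the end --- matches the paper's argument.
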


\begin{proof}
We want to apply the framework from \cite{doerr_randomized_2017}. We can assume that at the start of each round, the edges of the random graph are not yet sampled. Before the $G(n,p)$ is sampled, each uninformed node has the same probability of getting informed, hence the rumor spreading algorithm is homogeneous. First we consider the covariance numbers.
To do this, consider two uninformed nodes $x$ and $y$ and let $X$ and $Y$ denote the random indicator variables indicating whether $x$ or $y$ respectively get informed in this round. 
Note that, as the edges are not yet sampled, there is some positive correlation between $X$ and $Y$, because if we condition on the event that the uninformed node $x$ becomes informed, then it is slightly less likely that $x$ and the uninformed node $y$ are neighbors which increases the probability that $y$ has a higher fraction of informed neighbors and therefore $y$ pulls the information more likely. However, the framework from \cite{doerr_randomized_2017} allows for some positive correlation. We will bound the covariance accordingly. 
Let $\X:=``X=1"$, $\Y:=``Y=1"$ and let $E(G)$ denote the edge set of the random graph for the current round; define $\E :=``\{x,y\} \in E(G)"$. 

\begin{align*}
Cov(X,Y) = P[\X \cap \Y] - P[\X]P[\Y]= P[\X]P[\Y \mid \X] - P[\X]P[\Y]= P[\X](P[\Y \mid \X] - P[\Y]).
\end{align*}

Now consider $P[\Y \mid \X]$. We have

\begin{align*}
P[\Y \mid \X] \leq P[\Y \mid \neg \E]= \frac{P[\Y \cap \neg \E]}{P[\neg \E]} \leq \frac{P[\Y]}{P[\neg E]}=\frac{P[\Y]}{1-a/n}=P[Y]+\mathcal{O}(1/n)
\end{align*}

Hence we obtain

\begin{align*}
P[\X](P[\Y \mid \X] - P[\Y]) \leq P[\X] \mathcal{O}(1/n) \leq \frac{k}{n} \mathcal{O}(1/n).
\end{align*}

Therefore the covariance conditions are fulfilled for the exponential growing and shrinking phases.
 
Now we have to estimate the probability $p_k$ for an uninformed node to become informed in a round starting with $k$ informed nodes. 
If an uninformed node has a neighbor, i.e.~if it is not isolated, then with probability $k/(n-1)$ it becomes informed. However, if it is isolated, which according to Fact \ref{fact_1} is the case with probability $e^{-a} + \mathcal{O}(1/n)$, the node does not become informed in this round deterministically. Thus $p_k=(1-e^{-a} + \mathcal{O}(1/n))k/n$. Hence both, upper and lower, exponential growth conditions are fulfilled for arbitrary $0<f<1$ with $\gamma_n=1-e^{-a} + \mathcal{O}(1/n)$. Recall that according to Remark \ref{additiveChanges}, the $\mathcal{O}(1/n)$ term is negligible.

Theorem \ref{expgrowth} therefore yields   
\begin{align*}
E[T_n(1,fn)] = \log_{2-e^{-a}}(n)  + \mathcal{O}(1)
\end{align*}
and that there are $A_1,\alpha_1>0$ such that for all $r,n\in \N$
\begin{align*}
P[T_n(1,fn)\underset{\leq}{\geq} \log_{2-e^{-a}}(n) \underset{-}{+} r] \leq A_1 \exp(-\alpha_1 r).
\end{align*}

Next, for the exponential shrinking conditions, we consider $1-p_{n-u}$. We have 
\begin{align*}
1-p_{n-u}= 1- \frac{n-u}{n-1} (1-e^{-a} + \mathcal{O}(1/n))= e^{-a} + (1-e^{-a})\frac{u}{n} + \mathcal{O}(1/n).
\end{align*} 

The upper and lower exponential shrinking conditions are fulfilled with $\rho_n=a+\mathcal{O}(1/n)$ (because $e^{-a} + \mathcal{O}(1/n) = e^{-a + \mathcal{O}(1/n)}$) for an arbitrary $0<g<1$.
Note that according to Remark \ref{additiveChanges}, the term $\mathcal{O}(1/n)$ is negligible.
Theorem \ref{expshrink} therefore yields  
\begin{align*}
E[T_n(n-\floor{gn},n)] = \frac{1}{a}\ln(n) + \mathcal{O}(1)
\end{align*}
and that there are $A_2, \alpha_2>0$ such that for all $r, n \in \N$
\begin{align*}
P[T_n(n-\floor{gn},n)  \underset{\leq}{\geq} \frac{1}{a}\ln(n) \underset{-}{+} r] \leq A_2 \exp(-\alpha_2 r).
\end{align*}

Thus, considering the exponential growth phase and the exponential shrinking phase together, we obtain the claim.
\end{proof}

\section{\textit{Push\&Pull} in Random Evolving Graphs}
\label{pushpullsection}
\begin{theorem}
\label{pushpull}
Let $a>0$ and let $\gamma:=2(1-e^{-a})-(1-e^{-a})^2/a$. Assume that each round a newly sampled Erdős-Rényi random graph $G(n,a/n)$ is the underlying graph. Then for the rumor spreading time of \textit{Push\&Pull}, $T_n$, we have
 $$E[T_n]=\log_{1+\gamma}(n) + \frac{1}{a}\ln(n) + \mathcal{O}(1)$$  
and there are constants $A,\alpha>0$ such that for all $r,n \in \N$ $$P[|T_n-E[T_n]| \geq r ] \leq A \exp(- \alpha r).$$ 
\end{theorem}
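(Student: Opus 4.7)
The plan is to apply the Doerr--Kostrygin framework (Theorems \ref{expgrowth} and \ref{expshrink}) exactly as in the proof of Theorem \ref{pull}. By deferring the sampling of the round's edges until after each node has committed to its intended neighbour, all uninformed nodes are symmetric and the process is homogeneous, so a common success probability $p_k$ is well-defined. The work then splits into identifying $\gamma_n$ in the growth phase, $\rho_n$ in the shrinking phase, and checking the covariance conditions.

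For the growth phase (small $k/n$) I would decompose $p_k = P(\mathrm{Pull}) + P(\mathrm{Push}) - P(\mathrm{both})$. The Pull term is $(1-e^{-a})\,k/n + o(k/n)$ exactly as computed in the proof of Theorem \ref{pull}, and the Push term, after swapping the roles of $x$ and a candidate informed neighbour $y$ and invoking Fact \ref{fact_2} to evaluate $E[1/(1+\mathrm{Bin}(n-2,a/n))] = (1-e^{-a})/a + O(1/n)$, is again $(1-e^{-a})\,k/n + o(k/n)$. The crucial step, and the main obstacle, is computing $P(\mathrm{both})$. I would isolate the \emph{same-edge} contribution, namely that some informed $y$ is adjacent to $x$ with $x$ picking $y$ and simultaneously $y$ picking $x$. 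Conditionally on $\{x,y\}\in E(G)$ the degrees $d_x$ and $d_y$ are independent random variables of the form $1+\mathrm{Bin}(n-2,a/n)$ on disjoint sets of potential edges, so summing over the $k$ informed candidates gives $k\cdot(a/n)\cdot((1-e^{-a})/a)^2 + o(k/n) = (1-e^{-a})^2\,k/(a n) + o(k/n)$. Contributions in which Pull fires along an edge $\{x,y_1\}$ and Push along a different edge $\{x,y_2\}$ are $O((k/n)^2)$ and are absorbed into the slack $a(k/n)+b/\ln(n)$ tolerated by the exponential growth condition. Combining gives $p_k = \gamma\,k/n + o(k/n)$, so $\gamma_n = \gamma + O(1/\ln n)$. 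The subtlety is that a naive independence heuristic would predict $P(\mathrm{both}) = O((k/n)^2)$ and thus a spurious $\gamma = 2(1-e^{-a})$; the genuine first-order correction appears because Pull via $y$ and Push from $y$ both require the single rare edge $\{x,y\}$ (probability $a/n$), so conditioning on it inflates the joint probability by a factor of $n/a$ relative to the independence guess. This is exactly the interference phenomenon alluded to in Remark \ref{explanation}.

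For the shrinking phase (small $u=n-k$) I would show that $1-p_k$ is essentially $P(I_x=0)$, the probability that $x$ has no informed neighbour. By Fact \ref{fact_2}, $P(I_x=0) = (1-a/n)^k = e^{-a k/n} + O(1/n) = e^{-a} + O(u/n)$. The residual event ``$I_x\geq 1$ and yet both Pull and Push fail'' forces $x$ to have at least one uninformed neighbour, which occurs with probability at most $1-(1-a/n)^{u-1} = O(u/n)$. Hence $1-p_k = e^{-a} + O(u/n)$, giving $\rho_n = a + O(1/\ln n)$.

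For the covariances I would mimic the argument from the proof of Theorem \ref{pull}. For two uninformed nodes $x_1, x_2$, letting $\E$ denote the event $\{x_1,x_2\}\in E(G)$, the chain $P(X_2=1\mid X_1=1)\leq P(X_2=1\mid \neg \E)\leq P(X_2=1)/(1-a/n)$ yields $\mathrm{Cov}(X_1,X_2)\leq P(X_1=1)\cdot O(1/n)$, which is $O(k/n^2)$ in the growth phase and $O(1/n)\leq O(1/u)$ in the shrinking phase. A Push\&Pull-specific observation is that a shared informed neighbour can push to at most one of $x_1, x_2$, contributing negatively to the covariance and only aiding the required upper bound. Substituting $\gamma_n$ and $\rho_n$ into Theorems \ref{expgrowth} and \ref{expshrink} and concatenating the two phases then yields the claimed expected rumor spreading time $\log_{1+\gamma}(n) + (1/a)\ln(n) + \mathcal{O}(1)$ and the stated exponential large-deviation bound.
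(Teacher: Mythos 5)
Your proposal is correct and follows the same overall skeleton as the paper's proof: the Doerr--Kostrygin framework, the inclusion--exclusion $p_k=P[\PH]+P[\PL]-P[\PH\cap\PL]$, the covariance bound obtained by conditioning on the absence of the edge between the two uninformed nodes, and the identification $\gamma_n=2(1-e^{-a})-(1-e^{-a})^2/a$ and $\rho_n=a$ up to negligible additive terms. You diverge in two localized places. First, for the cross term $P[\PH\cap\PL]$ the paper proves Lemma \ref{intersection_venn_diagram}, namely $P[\PL\mid\PH]=(1-e^{-a})/a+\mathcal{O}(\mu)$, by a Bayes-type argument showing that conditioned on being pushed the node has exactly one informed neighbour up to $\mathcal{O}(\mu)$, and then evaluates $P[\PL\mid\INF(1)]$ as $E[1/(1+\mathrm{Bin}((1-\mu)n-1,a/n))]$ via the closed form of Lemma \ref{explicit_form_for_sum_intersection_venn_diagram}; you instead compute the joint probability of the ``same-edge'' event directly, using that conditionally on $\{x,y\}\in E(G)$ the two relevant degrees are independent copies of $1+\mathrm{Bin}(n-2,a/n)$ over disjoint sets of potential edges, and you discard the different-edge contributions as $\mathcal{O}(\mu^2)$. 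Both routes rest on the same expectation $E[1/(1+\mathrm{Bin})]=(1-e^{-a})/a+\mathcal{O}(1/n)$, which requires the closed-form identity of Lemma \ref{explicit_form_for_sum_intersection_venn_diagram} and not Fact \ref{fact_2} alone, so that step should be spelled out; your version is shorter and makes the interference mechanism of Remark \ref{explanation} more transparent, while the paper's conditional formulation isolates a reusable statement. Second, for the shrinking phase the paper verifies only the lower exponential shrinking condition for \textit{Push\&Pull} (via the isolation probability) and imports the matching upper bound from the \textit{Pull} analysis of Theorem \ref{pull} by domination, whereas you verify the upper shrinking condition directly by observing that failure in the presence of an informed neighbour forces the existence of an uninformed neighbour, an event of probability $\mathcal{O}(u/n)$; this is self-contained and equally valid. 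Everything else --- homogeneity via deferred edge sampling, the covariance chain, and the absorption of the $\mathcal{O}(\mu^2)$ and $\mathcal{O}(1/n)$ errors into the slack of the growth conditions --- matches the paper.
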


Before we prove Theorem \ref{pushpull} we introduce some notation. Consider an uninformed node $y$ at the beginning of a round that starts with $k\in \N$ informed nodes, let $\mu:=k/n$; we will refer to this round as the current round. Let $\PH_y$ denote the event that $y$ is pushed by an informed node in the current round. Analogously let $\PL_y$ denote the event that $y$ pulls the rumour in the current round from an informed node. Further set $\PP_y=\PH_y \cup \PL_y$, i.e.~$\PP_y$ denotes the event that $y$ is pushed or pulls the rumour in the current round.  For $j\in \{0,1,2,\dots,k\}$ let $\INF_y(j)$ denote the event that $y$ has exactly $j$ informed neighbours $x_1,\dots,x_j$. When we write $\INF_y(j)$ this implicitly defines $x_1,\dots,x_j$. Let $x$ be an informed node; let $\PH_y(x)$ denote the event that $y$ is pushed by $x$ in the current round. Similarly, let $\PL_y(x)$ denote the event that $y$ pulls the information from $x$ in the current round. When an index is clear from the context, it may be omitted.
Asymptotic notation is with respect to $n\ra \infty$ or $\mu \ra 0$ respectively.
We will use Lemma \ref{intersection_venn_diagram} to prove Theorem \ref{pushpull}; it quantifies the probability that an uninformed node pulls the information in the current round conditioned on that it gets also pushed by an informed node.
\begin{lemma}
\label{intersection_venn_diagram}
Let $a>0$ and assume that each round a newly sampled Erdős-Rényi random graph $G(n,a/n)$ is the underlying graph. 
Consider a round that starts with $k\in \N$ informed nodes and set $\mu:=k/n$; assume that the edges are not yet sampled. Let $y$ be an uninformed node. Then 
\begin{align*}
P[\PL_y\mid\PH_y]= \frac{1-e^{-a}}{a} + \mathcal{O}(\mu) \text{ for } \mu \ra 0.
\end{align*}
\end{lemma}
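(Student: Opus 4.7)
The plan is to decompose both $P[\PH_y]$ and $P[\PL_y\cap \PH_y]$ according to the number of informed neighbors of $y$, and exploit that only the single-informed-neighbor case contributes to leading order as $\mu\to 0$. Writing $P[\PH_y]=\sum_{j\geq 1}P[\PH_y\cap \INF_y(j)]$ and likewise for $P[\PL_y\cap\PH_y]$, the bound $P[\INF_y(j)]=\binom{k}{j}(a/n)^j(1-a/n)^{k-j}\leq (a\mu)^j/j!$ immediately gives $\sum_{j\geq 2}P[\INF_y(j)]=\mathcal{O}(\mu^2)$, so terms with $j\geq 2$ contribute at most $\mathcal{O}(\mu^2)$ to each sum. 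The term $j=0$ contributes zero since $\PH_y$ requires an informed neighbor.

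For the $j=1$ term I would condition on $\INF_y(1)$ with informed neighbor $x_1$. This conditioning freezes only the edges from $y$ to informed nodes ($\{y,x_1\}$ present, $\{y,x\}$ absent for the other $k-1$ informed $x$); every remaining potential edge is still present independently with probability $a/n$. Consequently $d(y)=1+U$ with $U\sim \mathrm{Bin}(n-k-1,a/n)$ (counting edges from $y$ to the $n-k-1$ uninformed nodes other than $y$) and $d(x_1)=1+V+W$ with $V+W\sim\mathrm{Bin}(n-2,a/n)$ (counting edges from $x_1$ to all other nodes except $y$) depend on \emph{disjoint} families of random edges, hence are independent. Given the graph the nodes' uniform choices are independent, and $\PH_y$ occurs iff $x_1$ picks $y$ (probability $1/d(x_1)$) while $\PL_y$ occurs iff $y$ picks $x_1$ out of its $d(y)$ neighbors (probability $1/d(y)$). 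Therefore $P[\PH_y\mid \INF_y(1)]=E[1/d(x_1)]$ and $P[\PL_y\cap \PH_y\mid \INF_y(1)]=E[1/d(x_1)]\cdot E[1/d(y)]$.

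Both expectations I would evaluate via the elementary identity $E[1/(1+\mathrm{Bin}(N,p))]=(1-(1-p)^{N+1})/((N+1)p)$ combined with Fact \ref{fact_2}, obtaining
\begin{align*}
E[1/d(x_1)]&=\tfrac{1-e^{-a}}{a}+\mathcal{O}(1/n),\\
E[1/d(y)]&=\tfrac{1-e^{-a(1-\mu)}}{a(1-\mu)}+\mathcal{O}(1/n)=\tfrac{1-e^{-a}}{a}+\mathcal{O}(\mu)+\mathcal{O}(1/n),
\end{align*}
the last step being Taylor expansion at $\mu=0$. Using $P[\INF_y(1)]=a\mu+\mathcal{O}(\mu^2)$ and the fact that $\mu\geq 1/n$ absorbs the $\mathcal{O}(1/n)$ terms into $\mathcal{O}(\mu)$, this yields $P[\PH_y]=(1-e^{-a})\mu+\mathcal{O}(\mu^2)$ and $P[\PL_y\cap \PH_y]=\big((1-e^{-a})^2/a\big)\mu+\mathcal{O}(\mu^2)$. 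Taking the ratio and simplifying gives $(1-e^{-a})/a+\mathcal{O}(\mu)$, as claimed.

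The only subtle step is the independence of $d(y)$ and $d(x_1)$ under the conditioning on $\INF_y(1)$; it requires careful parsing to see that the conditioning only freezes edges between $y$ and informed nodes, leaving the edges from $y$ to uninformed nodes and the edges from $x_1$ to nodes other than $y$ as two disjoint (hence independent) collections of Bernoulli variables. Everything else is routine binomial bookkeeping plus Fact \ref{fact_2}.
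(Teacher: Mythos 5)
Your proof is correct, and it reorganizes the argument in a way that differs noticeably from the paper's. The paper works entirely with conditional probabilities: it uses a union-bound-plus-Bayes argument to show that the conditional distribution of the number of informed neighbours given $\PH_y$ concentrates on the value $1$ up to $\mathcal{O}(\mu)$, concludes $P[\PL_y\mid\PH_y]=P[\PL_y\mid\INF_y(1)]+\mathcal{O}(\mu)$, and then evaluates $P[\PL_y\mid\INF_y(1)]$ by summing over the number of uninformed neighbours (its Lemma \ref{explicit_form_for_sum_intersection_venn_diagram} is exactly your identity $E[1/(1+\mathrm{Bin}(N,p))]=(1-(1-p)^{N+1})/((N+1)p)$ with $N+1=(1-\mu)n$). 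You instead expand the numerator $P[\PL_y\cap\PH_y]$ and the denominator $P[\PH_y]$ unconditionally over $\INF_y(j)$, discard $j\geq 2$ as $\mathcal{O}(\mu^2)$, and take a ratio; the computational core (the inverse binomial moment) is the same. Two things your route buys: first, you make explicit the conditional independence of $\PH_y$ and $\PL_y$ given $\INF_y(1)$ via the disjoint-edge-family argument for $d(y)$ and $d(x_1)$ --- the paper uses this silently when it replaces $P[\PL\mid\INF(1)\cap\PH]$ by $P[\PL\mid\INF(1)]$, so your version actually closes a small gap in the exposition; second, you rederive the leading-order estimate $P[\PH_y]=(1-e^{-a})\mu+\mathcal{O}(\mu^2)$ rather than importing it from Doerr--Kostrygin as the paper does in the proof of Theorem \ref{pushpull}. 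The price is that your argument leans on the exact conditional structure of $G(n,p)$ (which edges are frozen by the conditioning), whereas the paper's Bayes argument only needs the union bound and monotonicity of $P[\PH\mid\INF(j)]$ in $j$. Your observation that $\mu\geq 1/n$ (since $k\in\N$) absorbs the $\mathcal{O}(1/n)$ errors into $\mathcal{O}(\mu)$ is the same device the paper uses.
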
  
In order to prove Lemma \ref{intersection_venn_diagram} we will use Lemma \ref{explicit_form_for_sum_intersection_venn_diagram} that provides a closed form for a certain sum.
\begin{lemma}
\label{explicit_form_for_sum_intersection_venn_diagram}
Let $n \in \N$, $\mu \in (0,1)$ and $a \in \R$ with $a> 0$. Then
\begin{align*}
\sum\limits_{i=0}^{(1-\mu)n-1} \binom{(1-\mu)n-1}{i}\left(\frac{a}{n}\right)^i\left(1-\frac{a}{n}\right)^{(1-\mu)n-1-i}\frac{1}{i+1}=\frac{1-(1-\frac{a}{n})^{(1-\mu)n}}{a(1-\mu)}.
\end{align*}
\end{lemma}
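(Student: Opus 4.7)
The plan is to reduce the identity to a standard binomial sum by a single reindexing trick. To declutter notation, I would set $m := (1-\mu)n - 1$ and $p := a/n$, so that the left-hand side becomes
$$S := \sum_{i=0}^{m} \binom{m}{i} p^i (1-p)^{m-i} \frac{1}{i+1},$$
and the right-hand side, using $(m+1)p = (1-\mu)n \cdot a/n = a(1-\mu)$, becomes $\frac{1 - (1-p)^{m+1}}{(m+1)p}$. So the goal is the clean identity $S = \frac{1 - (1-p)^{m+1}}{(m+1)p}$.

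The key algebraic observation is the elementary identity
$$\frac{1}{i+1}\binom{m}{i} = \frac{1}{m+1}\binom{m+1}{i+1},$$
which is immediate from the factorial definition. Plugging this into $S$ and then substituting $j := i+1$ gives
$$S = \frac{1}{m+1}\sum_{i=0}^{m} \binom{m+1}{i+1} p^i (1-p)^{m-i} = \frac{1}{(m+1)p}\sum_{j=1}^{m+1} \binom{m+1}{j} p^j (1-p)^{m+1-j}.$$
The inner sum differs from the full binomial expansion of $(p + (1-p))^{m+1} = 1$ only by the missing $j = 0$ term, which equals $(1-p)^{m+1}$. Thus the inner sum equals $1 - (1-p)^{m+1}$, and we obtain $S = \frac{1 - (1-p)^{m+1}}{(m+1)p}$, which is the desired equality after undoing the substitutions.

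I do not expect any real obstacle: the whole argument is one reindexing plus one application of the binomial theorem. The only minor caveat worth mentioning is that $(1-\mu)n$ is implicitly assumed to be a positive integer so that the summation limits and binomial coefficients make sense; this assumption will be guaranteed in the application (Lemma \ref{intersection_venn_diagram}) by the context in which the identity is invoked.
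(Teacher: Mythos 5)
Your proof is correct and is essentially the same argument as the paper's: both absorb the factor $\frac{1}{i+1}$ into the binomial coefficient, shift the summation index, and recognize the resulting sum as the full binomial expansion of $1$ minus its $j=0$ term (the paper phrases this as the total probability of a $Bin((1-\mu)n,a/n)$ variable). The substitution $m=(1-\mu)n-1$, $p=a/n$ just makes the bookkeeping a little cleaner; the caveat that $(1-\mu)n$ must be a positive integer is a fair observation and applies equally to the paper's own statement.
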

\begin{proof} It is
\begin{align*}
& \sum\limits_{i=0}^{(1-\mu)n-1} \binom{(1-\mu)n-1}{i}\left(\frac{a}{n}\right)^i\left(1-\frac{a}{n}\right)^{(1-\mu)n-1-i}\frac{1}{i+1} \\ 
& \qquad \quad = \sum\limits_{i=0}^{(1-\mu)n-1} \frac{((1-\mu)n-1)!}{(i+1)!((1-\mu)n-1-i)!}\left(\frac{a}{n}\right)^i\left(1-\frac{a}{n}\right)^{(1-\mu)n-1-i} \\
& \qquad \quad = \sum\limits_{i=1}^{(1-\mu)n} \frac{((1-\mu)n-1)!}{i!((1-\mu)n-i)!}\left(\frac{a}{n}\right)^{i-1}\left(1-\frac{a}{n}\right)^{(1-\mu)n-i} \\
& \qquad \quad = \frac{n}{a}\left(-\frac{(1-\frac{a}{n})^{(1-\mu)n}}{(1-\mu)n} +\frac{1}{(1-\mu)n} \sum\limits_{i=0}^{(1-\mu)n} \frac{((1-\mu)n)!}{i!((1-\mu)n-i)!}\left(\frac{a}{n}\right)^{i}\left(1-\frac{a}{n}\right)^{(1-\mu)n-i} \right). 
\end{align*}
Let $X\sim Bin((1-\mu) n,a/n)$. It is 
\begin{align*}
1=\sum\limits_{i=0}^{(1-\mu) n }P[X=i]=\sum\limits_{i=0}^{(1-\mu)n} \frac{((1-\mu)n)!}{i!((1-\mu)n-i)!}\left(\frac{a}{n}\right)^{i}\left(1-\frac{a}{n}\right)^{(1-\mu)n-i}.
\end{align*} 
Hence we arrive at 
\begin{align*}
\sum\limits_{i=0}^{(1-\mu)n-1} \binom{(1-\mu)n-1}{i}\left(\frac{a}{n}\right)^i\left(1-\frac{a}{n}\right)^{(1-\mu)n-1-i}\frac{1}{i+1} &= \frac{n}{a}\left(-\frac{(1-\frac{a}{n})^{(1-\mu)n}}{(1-\mu)n} + \frac{1}{(1-\mu)n} \right)\\
&= \frac{1-(1-\frac{a}{n})^{(1-\mu)n}}{a(1-\mu)}. \qedhere
\end{align*}
\end{proof}
\begin{proof}[Proof of Lemma \ref{intersection_venn_diagram}]
We will omit $y$ as an index in this proof, i.e.~we will write $\PH$ instead of $\PH_y$ and so on.
First we verify that for all $j\in \{0,1,\dots,\mu n\}$
\begin{align}
\label{eq_1_intersection_venn_diagram_proof}
P[\PH\mid \INF(j)] \leq j P[\PH \mid \INF(1)]. 
\end{align}
It is 
\begin{align*}
P[\PH \mid \INF(j)] = P[\PH(x_1)\cup \dots \cup \PH(x_j)\mid \INF(j)]. 
\end{align*}
Hence by applying the union bound
\begin{align*}
P[\PH \mid \INF(j)] \leq j P[\PH(x_1)\mid \INF(j)]=j P[\PH(x_1)\mid \INF(1)]=j P[\PH \mid \INF(1)]
\end{align*}
which implies (\ref{eq_1_intersection_venn_diagram_proof}).
Similarly we verify that for all $j\in \{1,\dots,\mu n\}$
\begin{align}
\label{eq_2_intersection_venn_diagram_proof}
P[\PH\mid \INF(j)] \geq P[\PH \mid \INF(1)].
\end{align}
It is 
\begin{align*}
P[\PH \mid \INF(j)] &= P[\PH(x_1)\cup (\PH(x_2)\cup \dots \cup \PH(x_j))\mid \INF(j)] \geq P[\PH(x_1)\mid \INF(j)]\\&=P[\PH(x_1)\mid \INF(1)]=P[\PH\mid \INF(1)]
\end{align*}
which implies (\ref{eq_2_intersection_venn_diagram_proof}).
Next we verify that for all $j \in \{1,2,\dots,\mu n\}$
\begin{align}
\label{eq_3_intersection_venn_diagram_proof}
\frac{P[\INF(j)\mid \PH]}{P[\INF(1) \mid \PH]}\leq j \frac{P[\INF(j)]}{P[\INF(1)]}.
\end{align}
Using Bayes' Theorem and (\ref{eq_1_intersection_venn_diagram_proof}) we obtain
\begin{align*}
P[\INF(j)\mid \PH]= \frac{P[\PH\mid \INF(j)]P[\INF(j)]}{P[\PH]} \leq \frac{j P[\PH\mid \INF(1)] P[\INF(j)]}{P[\PH]}.
\end{align*} 
Hence, by again applying Bayes' Theorem we arrive at
\begin{align*}
P[\INF(j)\mid \PH] \leq j \frac{P[\INF(1) \mid \PH]P[\INF(j)]}{P[\INF(1)]}.
\end{align*}
This implies (\ref{eq_3_intersection_venn_diagram_proof}).
Analogously (using (\ref{eq_2_intersection_venn_diagram_proof}) instead of (\ref{eq_1_intersection_venn_diagram_proof})) one verifies
\begin{align}
\label{eq_4_intersection_venn_diagram_proof}
\frac{P[\INF(j)\mid \PH]}{P[\INF(1) \mid \PH]}\geq \frac{P[\INF(j)]}{P[\INF(1)]}.
\end{align}
Next we show that for any $j\in \{1,2,\dots,\mu n\}$ 
\begin{align}
\label{eq_5_intersection_venn_diagram_proof}
\frac{P[\INF(j)]}{P[\INF(1)]}=\mathcal{O}(\mu^{j-1}) \text{ for } \mu \ra 0.
\end{align}
We have 
\begin{align*}
\frac{P[\INF(j)]}{P[\INF(1)]} &= \frac{\binom{\mu n}{j} (\frac{a}{n})^j (1-\frac{a}{n})^{\mu n -j}}{\binom{\mu n}{1}\frac{a}{n}(1-\frac{a}{n})^{\mu n-1}} =
\frac{(\mu n)!}{j! (\mu n - j)! \mu n}\left(\frac{a}{n}\right)^{j-1}\left(1-\frac{a}{n}\right)^{-j+1} \\ &= \frac{\mu n-1}{n}\frac{\mu n -2}{n}\cdot \dots \cdot \frac{\mu n - j +1}{n} \cdot \frac{a^{j-1}}{j!}\left(1-\frac{a}{n}\right)^{-j+1} = \mathcal{O}(\mu^{j-1})
\end{align*}
which shows (\ref{eq_5_intersection_venn_diagram_proof}).
Using (\ref{eq_3_intersection_venn_diagram_proof}), (\ref{eq_4_intersection_venn_diagram_proof}) and (\ref{eq_5_intersection_venn_diagram_proof}) we can infer that for all $j\in \{1,2,\dots,\mu n\}$
\begin{align}
\label{eq_6_intersection_venn_diagram_proof}
\frac{P[\INF(j)\mid \PH]}{P[\INF(1)\mid \PH]}=\mathcal{O}(\mu^{j-1}).
\end{align}
Now we prove
\begin{align}
\label{eq_7_intersection_venn_diagram_proof}
P[\INF(1)\mid \PH]=1+\mathcal{O}(\mu).
\end{align} 
Using (\ref{eq_6_intersection_venn_diagram_proof}) we get 
\begin{align*}
\frac{P[\INF(2)\mid \PH]+P[\INF(3)\mid \PH] + \dots + P[\INF(\mu n)\mid \PH]}{P[\INF(1) \mid \PH]} = \mathcal{O}(\mu).
\end{align*}
Therefore 
\begin{align*}
P[\INF(2)\mid \PH]+P[\INF(3)\mid \PH] + \dots + P[\INF(\mu n)\mid \PH]=P[\INF(1) \mid \PH]\cdot \mathcal{O}(\mu)
\end{align*}
and thus
\begin{align*}
\underbrace{P[\INF(1)\mid \PH]+P[\INF(2)\mid \PH]+ \dots + P[\INF(\mu n)\mid \PH]}_{=1}=P[\INF(1) \mid \PH]\cdot (1+\mathcal{O}(\mu)).
\end{align*}
Hence 
\begin{align*}
(1+\mathcal{O}(\mu))P[\INF(1)\mid \PH]=1
\end{align*}
which implies (\ref{eq_7_intersection_venn_diagram_proof}).
Using (\ref{eq_7_intersection_venn_diagram_proof}) we obtain 
\begin{align}
P[\PL\mid \PH] = P[\PL\mid \INF(1)] + \mathcal{O}(\mu).
\end{align} 
Thus, to finish the proof, it suffices to show
\begin{align}
\label{eq_8_intersection_venn_diagram_proof}
P[\PL\mid \INF(1)]=\frac{1-e^{-a}}{a} + \mathcal{O}(\mu).
\end{align}
For each $j\in \{0,1,\dots,(1-\mu)n-1\}$ let $\UNF(j)$ denote the event that $y$ has exactly $j$ uninformed neighbours in the current round. 
Note that at the beginning of the round there is a fixed number of informed nodes, namely $\mu n$, and a fixed number of uninformed nodes, namely $(1-\mu)n$. In particular, for any $j \in \{0,1,\dots,(1-\mu)n-1\}$, $\UNF(j)$ and $\INF(1)$ are independent.
Hence we have 
\begin{align*}
P[\PL \mid \INF(1)] &= \sum\limits_{i=0}^{(1-\mu)n-1}P[\UNF(i)]\frac{1}{i+1}\\ &=\sum\limits_{i=0}^{(1-\mu)n-1} \binom{(1-\mu)n-1}{i}\left(\frac{a}{n}\right)^i\left(1-\frac{a}{n}\right)^{(1-\mu)n-1-i}\frac{1}{i+1}.
\end{align*} 
Thus, using Lemma \ref{explicit_form_for_sum_intersection_venn_diagram}, we can infer
\begin{align*}
P[\PL\mid \INF(1)]= \frac{1-(1-\frac{a}{n})^{(1-\mu)n}}{a(1-\mu)}.
\end{align*}
Using Fact \ref{fact_2}, this gives
\begin{align*}
P[\PL\mid \INF(1)] = \frac{1-e^{a(\mu - 1)}}{(1-\mu)a} + \mathcal{O}(1/n)=\frac{1-e^{a(\mu - 1)}}{(1-\mu)a} + \mathcal{O}(\mu).
\end{align*}
Thus, using the series representation of the exponential function at zero, we obtain
\begin{align*}
P[\PL\mid \INF(1)] = \frac{1-e^{-a}}{a} + \mathcal{O}(\mu)
\end{align*}
which shows (\ref{eq_8_intersection_venn_diagram_proof}) and hence completes the proof.
\end{proof}

\begin{proof}[Proof of Theorem \ref{pushpull}]
We want to use the framework from \cite{doerr_randomized_2017}.
To do this, consider a round of the rumour spreading process that starts with $k$ informed and $u=n-k$ uninformed nodes; we will refer to this round as the current round. Let $\mu:=k/n$. We can assume that at the start of the round, the edges of the random graph are not yet sampled.
We start with showing that the covariance conditions are fulfilled. Therefore consider two uninformed nodes $x$ and $y$. As before, $\E$ denotes the event that $x$ and $y$ become neighbours in the current round.
We have
\begin{align*}
Cov(\mathbbm{1}_{\PP_x},\mathbbm{1}_{\PP_y})=P[\PP_x \cap \PP_y] - P[\PP_x]P[\PP_y]=P[\PP_x](P[\PP_y\mid \PP_x]-P[\PP_y]).
\end{align*}
It is 
\begin{align*}
P[\PP_y\mid \PP_x] &\leq P[\PP_y\mid \neg E] = \frac{P[\PP_y \cap \neg E]}{P[\neg E]} \leq \frac{P[\PP_y]}{P[\neg E]}= \frac{P[\PP_y]}{1-a/n}=P[\PP_y] + \mathcal{O}(1/n).
\end{align*}
Hence 
\begin{align}
\label{cov_calculation}
Cov(\mathbbm{1}_{\PP_x},\mathbbm{1}_{\PP_y})=P[\PP_x]\cdot \mathcal{O}(1/n).
\end{align}
From \cite{doerr_randomized_2017} it is known that $$P[\PH_x]\leq k/n(1-e^{-a}+\mathcal{O}(1/n))$$
and therefore
$$P[\PP_x]\leq P[\PH_x]+P[\PL_x]\leq (1-e^{-a} + \mathcal{O}(1/n))\frac{k}{n} + (1+\mathcal{O}(1/n))\frac{k}{n} = (2-e^{-a}+\mathcal{O}(1/n))\frac{k}{n}.$$
This, together with (\ref{cov_calculation}), yields
\begin{align*}
Cov(\mathbbm{1}_{\PP_x},\mathbbm{1}_{\PP_y}) \leq (2-e^{-a})k/n \cdot \mathcal{O}(1/n).
\end{align*}
Hence the covariance conditions are fulfilled for the exponential growth and shrinking conditions.

For the exponential growth phase, we have to estimate the success probability $p_k=P[\PP_y]$ that an uninformed node $y$ becomes informed in the current round that starts with $k$ informed nodes. 
In the following, we write $\PP, \PH$ and $\PL$ instead of $\PP_y, \PH_y$ and $\PL_y$ respectively.
Note that $\PH$ and $\PL$ are not independent (as the edges are not sampled yet at the beginning of the round).

It is
\begin{align}
\label{venn_diagram_calculation}
P[\PP]=P[\PH \cup \PL]=P[\PH] + P[\PL] - P[\PH \cap \PL]. 
\end{align}
To compute $P[\PP]$ we consider the three summands of (\ref{venn_diagram_calculation}) individually:

\textbf{Term 1} $P[\PH]$: From \cite{doerr_randomized_2017} it is known that 
\begin{align}
\label{term1_summand}
\mu (1-e^{-a})\left(1-\frac{k + \mathcal{O}(1)}{2n}(1-e^{-a})\right) \leq P[\PH] \leq \mu(1-e^{-a} + \mathcal{O}(1/n)).
\end{align}
\textbf{Term 2} $P[\PL]$: According to Fact \ref{fact_1}, $y$ is isolated with probability $e^{-a}+ \mathcal{O}(1/n)$. Thus we have 
\begin{align}
\label{term2_summand}
P[\PL]=(1-e^{-a}+ \mathcal{O}(1/n))\mu.
\end{align}
\textbf{Term 3} $P[\PL\cap \PH]$: We have  
\begin{align}
\label{term3_summand}
P[\PL\cap \PH]=P[\PL\mid \PH] P[\PH].
\end{align}
Thus, using Lemma \ref{intersection_venn_diagram} and (\ref{term1_summand}) we obtain 
\begin{align*}
P[\PL\cap\PH]=\mu \frac{(1-e^{-a})^2}{a} + \mathcal{O}(\mu^2) \text{ for } \mu \ra 0 .
\end{align*}
Combining the three terms in (\ref{venn_diagram_calculation}), where asymptotic notation is with respect to $\mu \ra 0$, we obtain 
\begin{align*}
P[\PP]=\left(2(1-e^{-a})-\frac{(1-e^{-a})^2}{a} + \mathcal{O}(\mu)\right)\mu= \left(2(1-e^{-a})-\frac{(1-e^{-a})^2}{a}\right)(1+\mathcal{O}(\mu))\mu.
\end{align*} 
In particular there is an $a^* \geq 0 $ such that 
$$p_k=P[\PP] \underset{\leq}\geq \left (2(1-e^{-a}) - \frac{(1-e^{-a})^2}{a} \right)\frac{k}{n}(1 \underset{+}- a^* \frac{k}{n}).$$ 
Hence there is a constant $f>0$ such that the exponential growth conditions are fulfilled for $\gamma=2(1-e^{-a}) - (1-e^{-a})^2/a$.
Thus Theorem \ref{expgrowth} yields 
\begin{align*}
&E[X_n(1,fn)]= \log_{1+\gamma}(n)  + \mathcal{O}(1)
\end{align*}
and that there are constants $A_1,\alpha_1>0$ such that for all $r, n \in \N$
\begin{align*}
&P[X_n(1,fn)\underset{\leq}{\geq} \log_{1+\gamma}(n) \underset{-}{+} r] \leq A_1 \exp(-\alpha_1 r).
\end{align*}
 
Let $g \in (0,1)$ be an arbitrary constant. To complete the proof we show that $1/a \ln(n) + \mathcal{O}(1)$ is a lower bound for the number of rounds needed to inform all remaining nodes, starting with $\floor*{gn}$ informed nodes;
then the claim follows as \textit{Pull} provides a matching upper bound for the exponential shrinking phase.
Consider an uninformed node $y$. According to Fact \ref{fact_1}, $y$ is isolated in the current round with probability $e^{-a} + \mathcal{O}(1/n)=e^{-a+\mathcal{O}(1/n)}$. If $y$ is isolated, then it cannot be informed in the current round. Therefore 
$$1-P[\PP]\geq e^{-a} + \mathcal{O}(1/n).$$
Thus the lower exponential shrinking conditions are fulfilled for $\rho_n= a + \mathcal{O}(1/n)$ and $g\in (0,1)$ can indeed be chosen arbitrarily. 
 Therefore Theorem \ref{expshrink} yields  
\begin{align*}
&E[X_n(n-\floor*{gn},n)] \geq \frac{1}{a}\ln(n) + \mathcal{O}(1)
\end{align*}
and that there are $A_2, \alpha_2>0$ such that for all $r, n \in \N$
\begin{align*}
&P[X_n(n-\floor*{gn},n)  \leq \frac{1}{a}\ln(n) - r] \leq A_2 \exp(-\alpha_2 r).
\end{align*}
Together with the upper bounds that we obtain by considering \textit{Pull}, this completes the proof.
\end{proof}

\begin{remark}
\label{explanation}
It is interesting that \textit{Push\&Pull} does --- unlike \textit{Push} and \textit{Pull} --- behave differently on random evolving graphs than on the complete graph with message transmission success probability $1-e^{-a}$ (which is the probability that a node is not isolated (up to an additive $\mathcal{O}(1/n)$ term)).
The reason for this is that push and pull operations get in each other's way, i.e.~it has a relevant impact that some nodes get informed in the same round by a push as well as by a pull which makes one of those operations useless. This is in contrast to the situation in the complete graph with message transmission success probability $1-e^{-a}$: There in the beginning, \textit{Push} and \textit{Pull} essentially do not get in each other's way, i.e.~only very few nodes get informed by \textit{Push} as well as by \textit{Pull} in the beginning of the rumor spreading process. The reason for this difference is that in the complete graph each node has $n-1$ neighbors while in the setting of this paper the expected number of neighbors of a node is in each round $a + \mathcal{O}(1/n)$ and therefore here it is much more likely that a relevant fraction of edges is used by \textit{Push} as well as by \textit{Pull}.

Another interesting aspect is the behavior of \textit{Push\&Pull} in the last phase of the process:
As in the investigated setting both, \textit{Push} and \textit{Pull}, need logarithmic time for the exponential shrinking phase, one might conjecture that in the \textit{Push\&Pull} setting \textit{Push} as well as \textit{Pull} contribute substantially to the last phase. The reason why this is not the case, i.e.~why only \textit{Pull} contributes to the last phase, is the following:
Consider an uninformed node $x$ in the last phase, i.e.~if most nodes are informed already. In each round we first sample whether $x$ is isolated which, according to Fact \ref{fact_1}, is the case with probability $e^{-a} + \mathcal{O}(1/n)$. If $x$ is isolated it cannot be informed in that round, neither by a push nor by a pull. However, if $x$ is not isolated it is extremely probable that it becomes informed by a pull attempt. In particular the case that it does become informed by a push but not simultaneously also by a pull is very unlikely. So the problem essentially is that both, pull and push attempts, have to clear the same hurdle, i.e.~wait for a round where $x$ is not isolated. But after taking this hurdle it is extremely unlikely that a pull does not succeed while a push attempt does succeed.
\end{remark}

\newpage
\bibliography{literatur}
\bibliographystyle{abbrv}
\end{document}